\newcommand{\miniket}[1]{\vert#1\rangle}
\newcommand{\minibra}[1]{\langle#1\vert}
\newcommand{\miniprod}[2]{\langle#1\vert#2\rangle}
\newcommand{\sand}[3]{\langle#1\vert#2\vert#3\rangle}
\newtheorem{theo}{Theorem}
\newtheorem{lemma}{Lemma}
\newenvironment{proof}[1][Proof]{\begin{trivlist}
\item[\hskip \labelsep {\bfseries #1}]}{\end{trivlist}}
\newcommand{\qed}{\nobreak \ifvmode \relax \else
      \ifdim\lastskip<1.5em \hskip-\lastskip
      \hskip1.5em plus0em minus0.5em \fi \nobreak
      \vrule height0.75em width0.5em depth0.25em\fi}
\begin{document}

\title{Alternate two-dimensional quantum walk with a single-qubit coin}

\author{C. Di Franco$^1$, M. Mc Gettrick$^2$, T. Machida$^3$ and Th. Busch$^1$}

\affiliation{$^1$ Department of Physics, University College Cork, Cork, Republic of Ireland\\
$^2$ The De Br\'un Centre for Computational Algebra, School of Mathematics, The National University of Ireland, Galway, Republic of Ireland\\
$^3$ Meiji Institute for Advanced Study of Mathematical Sciences, Meiji University, 1-1-1 Higashimita, Tamaku, Kawasaki 214-8571, Japan}

\begin{abstract}
We have recently proposed a two-dimensional quantum walk where the requirement of a higher dimensionality of the coin space is substituted with the alternance of the directions in which the walker can move [C. Di Franco, M. Mc Gettrick, and Th. Busch, Phys. Rev. Lett. {\bf 106}, 080502 (2011)]. For a particular initial state of the coin, this walk is able to perfectly reproduce the spatial probability distribution of the non-localized case of the Grover walk. Here, we present a more detailed proof of this equivalence. We also extend the analysis to other initial states, in order to provide a more complete picture of our walk. We show that this scheme outperforms the Grover walk in the generation of $x$-$y$ spatial entanglement for any initial condition, with the maximum entanglement obtained in the case of the particular aforementioned state. Finally, the equivalence is generalized to wider classes of quantum walks and a limit theorem for the alternate walk in this context is presented.
\end{abstract}

\maketitle

\section{Introduction}

The interest of a wide scientific community in the study of quantum walks~\cite{Aharonov:93,Kempe:03} has recently increased. The main reason is that, in the same way that classical random walks have applications in several fields (see, as noticeable instances, physics, computer science, economics and biology~\cite{RWGeneral}), their quantum counterparts are useful in different scenarios, from simulating quantum circuits~\cite{Childs:09} to analyzing quantum lattice gas models~\cite{feynman}. Even if the complete range of possibilities is still under investigation, interesting experimental implementations have already been realized~\cite{experiments}. Quantum walks can also efficiently generate entanglement in experimentally feasible systems~\cite{QWEntanglement}. Due to the main role of entanglement in quantum metrology, quantum computation and quantum cryptography, its generation is of fundamental importance to realize reliable devices for quantum information processing~\cite{entanglementgeneral}.

A very interesting example of a two-dimensional quantum walk is the Grover walk, which can be used to implement the two-dimensional Grover search algorithm~\cite{GroverSearch}. Unfortunately, the experimental resources required for its realization are extremely challenging for the current state-of-the-art technology. This is in general a problem common to all two-dimensional quantum walks in which the coin has to be represented by a four-level system. A simplification in this respect is highly desirable, especially from a practical perspective. Ref.~\cite{ourwalk} presents a significant step forward in this direction: we have demonstrated that the spatial probability distribution of the non-localized case of the Grover walk can be obtained using only a two-level coin and a quantum walk in alternate directions (however, the two final states will be different). The requirement of a higher dimensionality of the coin space is thus substituted by alternating the directions in which the walker can move, offering a striking advantage in terms of the experimental resources needed for its implementation. Here, we present a more complete analysis of this alternate quantum walk, considering the case of different initial conditions and providing further details of the $x$-$y$ spatial entanglement generated with respect to the Grover walk.

To analyze the asymptotic behavior of quantum walks in the long-time limit, Fourier transform methods have been used, finding limit theorems in different contexts~\cite{FourierQWs,FourierQWs2}. A similar approach has been recently exploited for the investigation of the asymptotic behavior of the entanglement in one-dimensional quantum walks~\cite{Ide}. Here, we use this tool in order to find a limit distribution for the alternate quantum walk.

The remainder of this article is organized as follows. In Sec.~\ref{alternatequantumwalk}, we describe the model under investigation. Sec.~\ref{detailedproof} presents a complete and detailed proof that the alternate quantum walk is able to perfectly reproduce the spatial probability distribution of the non-localized case of the Grover walk. In Sec.~\ref{generalization}, we study the case of different initial states of the coin, with an analysis of the $x$-$y$ spatial entanglement generated with respect to the Grover walk. Sec.~\ref{generalizedwalks} deals with the generalization of the equivalence between the quantum walks for wider classes of them and we present a limit theorem for the alternate walk in this context. Finally, Sec.~\ref{conclusions} summarizes our results.

\section{Alternate quantum walk with single-qubit coin}
\label{alternatequantumwalk}

The total state of the system considered here is a vector in the composite Hilbert space ${\cal H}={\cal H}_W\otimes{\cal H}_C$, where ${\cal H}_C$ (coin space) is a two-dimensional Hilbert space spanned by $\{\miniket{0},\miniket{1}\}$ and ${\cal H}_W$ (walker space) is an infinite-dimensional Hilbert space spanned by $\{\miniket{x,y}\}$, with $x$ and $y$ assuming all possible integer values. Let us define the basis states of this space ${\cal H}$ as $\{\miniket{x,y,c}\}$ where, for the sake of simplicity, we have defined $\miniket{x,y,c}=\miniket{x,y}_W\otimes\miniket{c}_C$. In the standard representation of two-dimensional quantum walks, $x$ and $y$ denote the position of a particle ({\it walker}) along the $x$ and $y$ directions, respectively. We stress however that here, differently from other two-dimensional quantum walks, $\miniket{c}_C$ is the state of a single-qubit coin (thus embodied by a two-level quantum system). From now on, we consider our walks (the alternate one as well as the Grover one) always starting at the origin, {\it i.e.} in $\miniket{0,0}_W$.

The evolution of the system is given by a sequence of conditional shift and coin operations. We have two different conditional shift operations
\begin{equation}
\hat{S}_x=\sum_{i,j\in \mathbb{Z}}\miniket{i-1,j,0}\minibra{i,j,0}+\sum_{i,j\in \mathbb{Z}}\miniket{i+1,j,1}\minibra{i,j,1}
\end{equation}
and
\begin{equation}
\hat{S}_y=\sum_{i,j\in \mathbb{Z}}\miniket{i,j-1,0}\minibra{i,j,0}+\sum_{i,j\in \mathbb{Z}}\miniket{i,j+1,1}\minibra{i,j,1}.
\end{equation}
If we consider the walker component $\miniket{i,j}_W$ as describing the quantized position of the walker in the $x$ and $y$ directions with increasing numbers from left to right and from bottom to top, respectively, the effect of $\hat{S}_x$ is to move the walker one step to the left (right) when the coin component is in the state $\miniket{0}_C$ ($\miniket{1}_C$) and the effect of $\hat{S}_y$ is to move the walker one step down (up) when the coin component is in the state $\miniket{0}_C$ ($\miniket{1}_C$). Our coin operation is the Hadamard gate
\begin{equation}
\hat{H}=\frac{1}{\sqrt{2}}
\begin{pmatrix}
1&1\\
1&-1
\end{pmatrix}\;,
\end{equation}
as in the original one-dimensional quantum walk~\cite{Aharonov:93}. A single time step consists here of two Hadamard operations and two movements on the $x$ and $y$ directions, according to the following sequence: coin operation, movement on $x$, coin operation, movement on $y$ (as sketched in Fig.~\ref{fig:Sketch}).

\begin{figure}[t]
\centerline{\psfig{figure=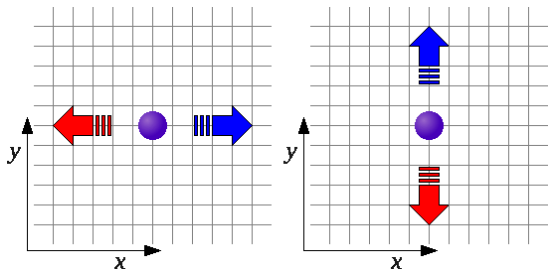,height=3.5cm}}
\caption{Sketch of the alternate quantum walk: the walker is allowed to move alternately in two orthogonal directions of a two-dimensional lattice; a coin operation is performed before each movement.}
\label{fig:Sketch}
\end{figure}

It is also useful to briefly present here the details of the Grover walk. The coin space ${\cal H}_{C'}$, in this case, is four-dimensional, so we define the basis states of the total Hilbert space as $\{\miniket{x,y,c'}\}$, where $c'\in\{0,1,2,3\}$. The states of the computational basis of the coin $\miniket{0}_{C'}$, $\miniket{1}_{C'}$, $\miniket{2}_{C'}$ and $\miniket{3}_{C'}$ correspond to movements in the left-down, left-up, right-down and right-up directions, respectively. The Grover coin operation is given by
\begin{equation}
\hat{G}=
\frac{1}{2}\left(
\begin{array}{rrrr}
-1&1&1&1\\
1&-1&1&1\\
1&1&-1&1\\
1&1&1&-1
\end{array}\right),
\end{equation}
and a single time step consists here of a Grover coin operation and a movement on the $x$-$y$ plane. In this particular scheme, the walker is always localized ({\it i.e.} the probability to find it at the origin is asymptotically larger than $0$ for $t\rightarrow\infty$), except if the coin is in the particular initial state~\cite{localization}
\begin{equation}
\frac{1}{2}(\miniket{0}_{C'}-\miniket{1}_{C'}-\miniket{2}_{C'}+\miniket{3}_{C'}).
\label{eq:GroverLocalized}
\end{equation}
The spatial probability distribution for this case can be obtained by tracing out the state of the coin, as presented in Fig.~\ref{fig:GroverWalkNonLocalized} after $t=50$ time steps.

\begin{figure}[t]
\centerline{\psfig{figure=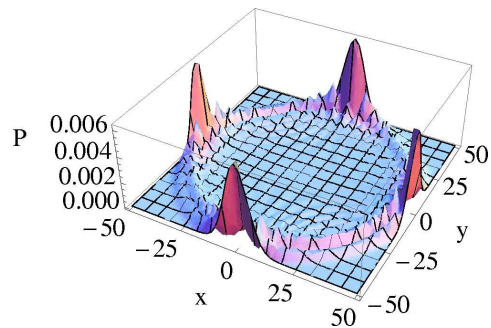,height=4.25cm}}
\caption{Spatial probability distribution after $t=50$ time steps of the two-dimensional Grover walk with the initial state of the coin as in Eq.~(\ref{eq:GroverLocalized}). Only the sites with even $x$ and $y$ are shown, as the probability is zero for all odd sites.}
\label{fig:GroverWalkNonLocalized}
\end{figure}

\section{Equivalence between the alternate quantum walk and the Grover walk}
\label{detailedproof}

In what follows, we illustrate how the coefficients of the Grover walk in the non-localized case can be mapped to the coefficients of the alternate quantum walk where the initial condition of the coin is
\begin{equation}
\frac{1}{\sqrt{2}}(\miniket{0}_C+i\miniket{1}_C),
\label{eq:AlternateInitial}
\end{equation}
as in the original symmetric one-dimensional quantum walk~\cite{Aharonov:93}. The coefficients in the decomposition of the states in the Grover walk and in the alternate quantum walk, with respect to the bases $\{\miniket{x,y,c'}\}$ and $\{\miniket{x,y,c}\}$, are defined as $\alpha_{x,y,c'}(t)$ and $\beta_{x,y,c}(t)$, respectively. It is easy to note that, for the initial states under consideration, the $\alpha_{x,y,c'}(t)$'s are real numbers, while the $\beta_{x,y,c}(t)$'s are complex numbers.

\begin{lemma}\label{lem:grover}

In the Grover walk, with the walker starting at the origin and the initial conditions
\begin{equation}
\begin{split}
\alpha_{0,0,0}(0)&=1/2,\\
\alpha_{0,0,1}(0)&=-1/2,\\
\alpha_{0,0,2}(0)&=-1/2,\\
\alpha_{0,0,3}(0)&=1/2,
\label{alpha0}
\end{split}
\end{equation}
the amplitudes satisfy the properties
\begin{eqnarray}
\nonumber&\alpha_{x-1,y,0}(t)+\alpha_{x-1,y,1}(t)\\
&+\alpha_{x+1,y,2}(t)+\alpha_{x+1,y,3}(t)=0,
\label{alpha1}\\
\nonumber&\alpha_{x,y-1,0}(t)+\alpha_{x,y-1,2}(t)\\
&+\alpha_{x,y+1,1}(t)+\alpha_{x,y+1,3}(t)=0.
\label{alpha2}
\end{eqnarray}

\end{lemma}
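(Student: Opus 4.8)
The plan is to establish both identities \eqref{alpha1} and \eqref{alpha2} simultaneously, by induction on the number of time steps $t$, exploiting a duality by which a single step of the Grover walk interchanges the two relations. First I would make the one-step recursion explicit. Composing the Grover coin $\hat{G}$ with the conditional shift --- which translates the coin components $0,1,2,3$ in the left-down, left-up, right-down and right-up directions respectively --- one finds that $\alpha_{x,y,0}(t+1)$ equals half the alternating sum $(-\alpha_{x+1,y+1,0}+\alpha_{x+1,y+1,1}+\alpha_{x+1,y+1,2}+\alpha_{x+1,y+1,3})(t)$, and similarly $\alpha_{x,y,1}(t+1)$, $\alpha_{x,y,2}(t+1)$, $\alpha_{x,y,3}(t+1)$ equal half the corresponding rows of $\hat{G}$ evaluated at the sites $(x+1,y-1)$, $(x-1,y+1)$, $(x-1,y-1)$, with the minus sign always sitting on the diagonal entry.

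Next I would check the base case. With the initial amplitudes \eqref{alpha0}, the only sites contributing to \eqref{alpha1} are those with $(x\mp 1,y)=(0,0)$, giving $\alpha_{0,0,0}(0)+\alpha_{0,0,1}(0)=0$ and $\alpha_{0,0,2}(0)+\alpha_{0,0,3}(0)=0$; likewise \eqref{alpha2} reduces to $\alpha_{0,0,0}(0)+\alpha_{0,0,2}(0)=0$ and $\alpha_{0,0,1}(0)+\alpha_{0,0,3}(0)=0$. Hence both identities hold at $t=0$.

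The inductive step is where the content sits. Assuming \eqref{alpha1} and \eqref{alpha2} hold at time $t$ for all $x,y$, I would substitute the one-step recursion into the left-hand side of \eqref{alpha1} at time $t+1$: the four amplitudes $\alpha_{x-1,y,0}$, $\alpha_{x-1,y,1}$, $\alpha_{x+1,y,2}$, $\alpha_{x+1,y,3}$ at time $t+1$ are then expressed through amplitudes at the sites $(x,y+1)$ and $(x,y-1)$ at time $t$. The two contributions from $(x,y+1)$ combine, thanks to the row structure of $\hat{G}$, to $\alpha_{x,y+1,1}(t)+\alpha_{x,y+1,3}(t)$ (the diagonal terms cancel in pairs), and the two from $(x,y-1)$ combine to $\alpha_{x,y-1,0}(t)+\alpha_{x,y-1,2}(t)$; their sum is exactly the left-hand side of \eqref{alpha2} at time $t$, which vanishes by hypothesis. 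The mirror computation shows the left-hand side of \eqref{alpha2} at time $t+1$ collapses to the left-hand side of \eqref{alpha1} at time $t$, again zero. This closes the induction. The only delicate point is bookkeeping: one must keep the association between the four coin labels and the four diagonal directions consistent throughout, and track index shifts carefully when inverting the shift operator --- but conceptually the argument is just the observation that a Grover step exchanges the $x$-type and $y$-type cancellation relations.
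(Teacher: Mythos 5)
Your proposal is correct and follows essentially the same route as the paper: induction on $t$, with the base case reduced to the origin amplitudes and the inductive step showing that one Grover step maps the left-hand side of Eq.~(\ref{alpha1}) onto the left-hand side of Eq.~(\ref{alpha2}) (via the row sums $G_{0j}+G_{2j}$ and $G_{1j}+G_{3j}$) and vice versa. The bookkeeping you flag (coin labels versus diagonal directions, and the index shifts under the inverse of the shift) matches the paper's computation exactly, so nothing further is needed.
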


\begin{proof}

We proceed by induction on $t$.

\begin{description}

\item[Base Case]

It is easy to see that Eqs.~(\ref{alpha1}) and (\ref{alpha2}) are satisfied at $t=0$. For completeness, the details are as follows:
\begin{itemize}
\item At $(x,y)=(1,0)$, Eq.~(\ref{alpha1}) reads $\alpha_{0,0,0}(0)+\alpha_{0,0,1}(0)+\alpha_{2,0,2}(0)+\alpha_{2,0,3}(0)=1/2-1/2+0+0=0$.
\item At $(x,y)=(-1,0)$, Eq.~(\ref{alpha1}) reads $\alpha_{-2,0,0}(0)+\alpha_{-2,0,1}(0)+\alpha_{0,0,2}(0)+\alpha_{0,0,3}(0)=0+0-1/2+1/2=0$.
\item At $(x,y)=(0,1)$, Eq.~(\ref{alpha2}) reads $\alpha_{0,0,0}(0)+\alpha_{0,0,1}(0)+\alpha_{0,2,2}(0)+\alpha_{0,2,3}(0)=1/2-1/2+0+0=0$.
\item At $(x,y)=(0,-1)$, Eq.~(\ref{alpha2}) reads $\alpha_{0,-2,0}(0)+\alpha_{0,-2,1}(0)+\alpha_{0,0,2}(0)+\alpha_{0,0,3}(0)=0+0-1/2+1/2=0$.
\end{itemize}
For all other values of $(x,y)$, the $\alpha$'s are initially zero.

\item[Inductive Step]

Assume that both Eqs.~(\ref{alpha1}) and (\ref{alpha2}) are true [for all points $(x,y)$] at some time $t$. Then we need to prove that they hold at time $t+1$. Starting with the left-hand side of Eq.~(\ref{alpha1}), we have 
\begin{widetext}
\begin{equation}
\begin{split}
&\alpha_{x-1,y,0}(t+1)+\alpha_{x-1,y,1}(t+1)+\alpha_{x+1,y,2}(t+1)+\alpha_{x+1,y,3}(t+1)=\sum_{j=0}^3[G_{0j}\alpha_{x,y+1,j}(t)+G_{1j}\alpha_{x,y-1,j}(t)\\
&+G_{2j}\alpha_{x,y+1,j}(t)+G_{3j}\alpha_{x,y-1,j}(t)]=\sum_{j=0}^3[(G_{0j} + G_{2j})\alpha_{x,y+1,j}(t)+(G_{1j}+G_{3j})\alpha_{x,y-1,j}(t)]\\
&=\alpha_{x,y+1,1}(t)+\alpha_{x,y+1,3}(t)+\alpha_{x,y-1,0}(t)+\alpha_{x,y-1,2}(t),
\end{split}
\end{equation}
\end{widetext}
which is identically zero because we have assumed Eq.~(\ref{alpha2}) true at time $t$. Here, $G_{ij}$ $(i,j=0,1,2,3)$ is the element of the matrix $\hat{G}$ corresponding to $\miniket{i}_C\minibra{j}$. It is straightforward to proceed in the same way to prove Eq.~(\ref{alpha2}) at time $t+1$, assuming that Eq.~(\ref{alpha1}) is true at time $t$.

\end{description}

\hfill\qed

\end{proof}

\begin{theo}
\label{theo1}

The relations between the amplitudes $\beta_{x,y,j}(t)$ of the alternate quantum walk with the initial conditions
\begin{equation}
\begin{split}
\beta_{0,0,0}(0)&=1/\sqrt{2},\\
\beta_{0,0,1}(0)&=i/\sqrt{2}
\end{split}
\end{equation}
and the amplitudes $\alpha_{x,y,k}(t)$ of the Grover walk with the initial state as in Eq.~(\ref{alpha0}) are given by
\begin{eqnarray}
\beta_{x,y,0}(t)&=&(-1)^te^{i\pi/4}[\alpha_{x,y,0}(t)+i\alpha_{x,y,2}(t)]\label{beta1},
\label{theoremequation1}\\
\beta_{x,y,1}(t)&=&(-1)^te^{i\pi/4}[-\alpha_{x,y,1}(t)+i\alpha_{x,y,3}(t)\label{beta2}].
\label{theoremequation2}
\end{eqnarray}

\end{theo}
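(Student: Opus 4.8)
The plan is to proceed by induction on $t$, in parallel with the proof of Lemma~\ref{lem:grover}. For the base case $t=0$, the stated initial data give $\beta_{0,0,0}(0)=1/\sqrt2$, $\beta_{0,0,1}(0)=i/\sqrt2$ and $\beta_{x,y,c}(0)=0$ otherwise, while the right-hand sides of Eqs.~(\ref{theoremequation1})--(\ref{theoremequation2}) reduce at the origin to $e^{i\pi/4}[\alpha_{0,0,0}(0)+i\alpha_{0,0,2}(0)]=e^{i\pi/4}\tfrac12(1-i)$ and $e^{i\pi/4}[-\alpha_{0,0,1}(0)+i\alpha_{0,0,3}(0)]=e^{i\pi/4}\tfrac12(1+i)$, and vanish at all other sites; the elementary identity $\tfrac12(1\mp i)=\tfrac{1}{\sqrt2}\,e^{\mp i\pi/4}$ then closes the base case.

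For the inductive step I would first write out the one-step recursion of the alternate walk. Since a time step is $\hat{S}_y\hat{H}\hat{S}_x\hat{H}$, propagating the amplitudes through it gives
\begin{equation}
\begin{split}
\beta_{x,y,0}(t+1)=&\tfrac12\big[\beta_{x+1,y+1,0}(t)+\beta_{x+1,y+1,1}(t)\\
&+\beta_{x-1,y+1,0}(t)-\beta_{x-1,y+1,1}(t)\big],\\
\beta_{x,y,1}(t+1)=&\tfrac12\big[\beta_{x+1,y-1,0}(t)+\beta_{x+1,y-1,1}(t)\\
&-\beta_{x-1,y-1,0}(t)+\beta_{x-1,y-1,1}(t)\big].
\end{split}
\end{equation}
Alongside it I would use the Grover recursion already employed in the proof of Lemma~\ref{lem:grover}, which expresses each $\alpha_{x,y,k}(t+1)$ as $\sum_j G_{kj}$ times the value of $\alpha_{\cdot,\cdot,j}(t)$ at the site obtained by undoing the diagonal shift associated with the coin state $k$. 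I would then substitute the inductive hypothesis for the four $\beta(t)$ terms in each line above, collect the $\alpha(t)$ coefficients, expand $(-1)^{t+1}e^{i\pi/4}[\alpha_{x,y,0}(t+1)+i\alpha_{x,y,2}(t+1)]$ and $(-1)^{t+1}e^{i\pi/4}[-\alpha_{x,y,1}(t+1)+i\alpha_{x,y,3}(t+1)]$ through the Grover recursion, and compare.

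The only nontrivial point is that these expressions do not match term by term: their difference collapses to a constant multiple of
\begin{equation}
\alpha_{x-1,y\pm1,0}(t)+\alpha_{x-1,y\pm1,1}(t)+\alpha_{x+1,y\pm1,2}(t)+\alpha_{x+1,y\pm1,3}(t),
\end{equation}
which is exactly the left-hand side of Eq.~(\ref{alpha1}) of Lemma~\ref{lem:grover} evaluated at $(x,y\pm1)$, hence zero. Thus the particular complex combinations $\alpha_{\cdot,\cdot,0}+i\alpha_{\cdot,\cdot,2}$ and $-\alpha_{\cdot,\cdot,1}+i\alpha_{\cdot,\cdot,3}$ appearing in the statement are precisely those for which the single-qubit Hadamard evolution reproduces the action of the $4\times4$ Grover coin, once the walker-amplitude identity of Lemma~\ref{lem:grover} is imposed.

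I expect the main obstacle to be organizational rather than analytic: keeping track of which position offsets ($\pm1$ in $x$ versus $\pm1$ in $y$) and which rows and columns of $\hat{G}$ enter at each stage, and of the sign conventions of $\hat{S}_x$ and $\hat{S}_y$. A minor check to perform along the way concerns the global phase $(-1)^te^{i\pi/4}$: the $e^{i\pi/4}$ is fixed by the base case, while the factor $(-1)^t$ emerges because the $\tfrac12$ prefactor in the $\beta$ recursion must be reconciled with the $\tfrac12$ carried by the Grover matrix elements together with the overall sign flip produced when the surplus $\alpha(t)$ terms cancel through Lemma~\ref{lem:grover}.
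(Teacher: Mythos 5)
Your proposal is correct and follows essentially the same route as the paper's proof: induction on $t$ with the identical one-step recursions for the two walks, the base case fixed by the phase $e^{i\pi/4}$, and the inductive step closed by Eq.~(\ref{alpha1}) of Lemma~\ref{lem:grover} applied at $(x,y+1)$ for $\beta_{x,y,0}$ and at $(x,y-1)$ for $\beta_{x,y,1}$ --- exactly the identity the paper invokes. The only cosmetic difference is that you phrase the cancellation as ``the difference of the two sides is a multiple of the lemma's vanishing combination'' rather than substituting that identity inside the expression as the paper does; the algebra is the same.
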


\begin{proof}

We proceed by induction on $t$.

\begin{description}

\item[Base Case] At $t=0$, all the amplitudes are zero outside of the origin $(x,y)=(0,0)$. At the origin, we have
\begin{eqnarray}
\begin{split}
\beta_{0,0,0}(0)&=(-1)^0e^{i\pi/4}[\alpha_{0,0,0}(0)+i\alpha_{0,0,2}(0)]\\
&=\frac{1+i}{\sqrt{2}}\left(\frac{1}{2}-\frac{i}{2}\right)=\frac{1}{\sqrt{2}},
\end{split}\\
\begin{split}
\beta_{0,0,1}(0)&=(-1)^0e^{i\pi/4}[-\alpha_{0,0,1}(0)+i\alpha_{0,0,3}(0)]\\
&=\frac{1+i}{\sqrt{2}}\left(\frac{1}{2}+\frac{i}{2}\right)=\frac{i}{\sqrt{2}}.
\end{split}
\end{eqnarray}

\item[Inductive Step]

Assume that both Eqs.~(\ref{beta1}) and (\ref{beta2}) are true [for all points $(x,y)$] at some time $t$. Then, we need to prove that they hold at time $t+1$. The progression of each walk is as follows.

\begin{description}

\item[Alternate quantum walk]

\begin{eqnarray}
\begin{split}
\beta_{x,y,0}(t+1)=&\,\frac{1}{2}[\beta_{x+1,y+1,0}(t)+\beta_{x+1,y+1,1}(t)\\
&+\beta_{x-1,y+1,0}(t)-\beta_{x-1,y+1,1}(t)],
\label{alt1}
\end{split}\\
\begin{split}
\beta_{x,y,1}(t+1)=&\,\frac{1}{2}[\beta_{x+1,y-1,0}(t)+\beta_{x+1,y-1,1}(t)\\
&-\beta_{x-1,y-1,0}(t)+\beta_{x-1,y-1,1}(t)].
\label{alt2}
\end{split}
\end{eqnarray}

\item[Grover walk]

\begin{equation}
\begin{split}
\alpha_{x,y,0}(t+1)&=\sum_{j=0}^3 G_{0j}\alpha_{x+1,y+1,j}(t),\\
\alpha_{x,y,1}(t+1)&=\sum_{j=0}^3 G_{1j}\alpha_{x+1,y-1,j}(t),\\
\alpha_{x,y,2}(t+1)&=\sum_{j=0}^3 G_{2j}\alpha_{x-1,y+1,j}(t),\\
\alpha_{x,y,3}(t+1)&=\sum_{j=0}^3 G_{3j}\alpha_{x-1,y-1,j}(t).
\end{split}
\end{equation}

\end{description}

Starting from Eq.~(\ref{alt1}), we have

\begin{equation}
\begin{split}
\beta_{x,y,0}(t+1)&=\frac{1}{2}(-1)^te^{i\pi/4}\{\alpha_{x+1,y+1,0}(t)\!-\!\alpha_{x+1,y+1,1}(t)\\
&+i\alpha_{x-1,y+1,2}(t)-i\alpha_{x-1,y+1,3}(t)\\
&+[\alpha_{x-1,y+1,0}(t)+\alpha_{x-1,y+1,1}(t)\\
&+i\alpha_{x+1,y+1,2}(t)+i\alpha_{x+1,y+1,3}(t)]\}.
\label{ind1}
\end{split}
\end{equation}
Now we use Lemma~\ref{lem:grover}, specifically the relation
\begin{equation}
\alpha_{x-1,y,0}(t)\!+\!\alpha_{x-1,y,1}(t)\!=\!-\alpha_{x+1,y,2}(t)\!-\!\alpha_{x+1,y,3}(t),
\end{equation}
to replace the terms within the square brackets in Eq.~(\ref{ind1}) and obtain
\begin{widetext}
\begin{equation}
\begin{split}
\beta_{x,y,0}(t+1)&=\frac{1}{2}(-1)^te^{i\pi/4}\{\alpha_{x+1,y+1,0}(t)-\alpha_{x+1,y+1,1}(t)+i\alpha_{x-1,y+1,2}(t)-i\alpha_{x-1,y+1,3}(t)\\
&+[-\alpha_{x+1,y+1,2}(t)-\alpha_{x+1,y+1,3}(t)-i\alpha_{x-1,y+1,0}(t)-i\alpha_{x-1,y+1,1}(t)]\}\\
&=\frac{1}{2}(-1)^{t+1}e^{i\pi/4}\{-\alpha_{x+1,y+1,0}(t)+\alpha_{x+1,y+1,1}(t)+\alpha_{x+1,y+1,2}(t)+\alpha_{x+1,y+1,3}(t)\\
&+i\alpha_{x-1,y+1,0}(t)+i\alpha_{x-1,y+1,1}(t)-i\alpha_{x-1,y+1,2}(t)+i\alpha_{x-1,y+1,3}(t)\}\\
&=(-1)^{t+1}e^{i\pi/4}\{\sum_{j=0}^3 G_{0j}\alpha_{x+1,y+1,j}(t)+i\sum_{j=0}^3G_{2j}\alpha_{x-1,y+1,j}(t)\}\\
&=(-1)^{t+1}e^{i\pi/4}\{\alpha_{x,y,0}(t+1)+i\alpha_{x,y,2}(t+1)\},
\end{split}
\end{equation}
\end{widetext}
which completes the proof for Eq.~(\ref{beta1}). An analogous analysis allows us to prove the partner Eq.~(\ref{beta2}).

\end{description}
\hfill\qed

\end{proof}

It is now straightforward to see why the two walks have the same spatial probability distribution. The probability to find the walker at position $(x,y)$ after $t$ time steps is
\begin{equation}
P(x,y)=\sum_{j=0}^3|\alpha_{x,y,j}(t)|^2
\end{equation}
for the Grover walk and
\begin{equation}
P(x,y)=\sum_{j=0}^1|\beta_{x,y,j}(t)|^2
\end{equation}
for the alternate quantum walk. The relations in Eqs.~(\ref{theoremequation1}) and (\ref{theoremequation2}) and the fact that the $\alpha_{x,y,j}(t)$'s are real numbers guarantee that the two probabilities are the same for all the points $(x,y)$ at any time $t$.

\section{Alternate quantum walk with different initial conditions}
\label{generalization}

So far, the analysis has been performed on the alternate quantum walk for an initial state of the coin given by Eq.~(\ref{eq:AlternateInitial}). Clearly, any two-level state can be chosen as the initial condition of the coin. If we focus on pure states, its general form can be cast as
\begin{equation}
\miniket{\psi}=\cos\frac{\theta}{2}\miniket{0}+e^{i\phi}\sin\frac{\theta}{2}\miniket{1},
\label{generalinitialstate}
\end{equation}
with $\theta\in[0,\pi]$ and $\phi\in[0,2\pi]$. Let us first analyze the spatial probability distribution for particular instances of the initial conditions. If we consider the states
\begin{equation}
\miniket{\psi_1}=\miniket{1},
\label{initialstate1}
\end{equation}
and
\begin{equation}
\miniket{\psi_2}=\frac{1}{\sqrt{2}}(\miniket{0}-\miniket{1})
\label{initialstate2}
\end{equation}
the corresponding spatial probability distributions after $t=50$ time steps are those shown in Fig.~\ref{plot1b} and Fig.~\ref{plot1c},  respectively.
\begin{figure}[t]
\centerline{\psfig{figure=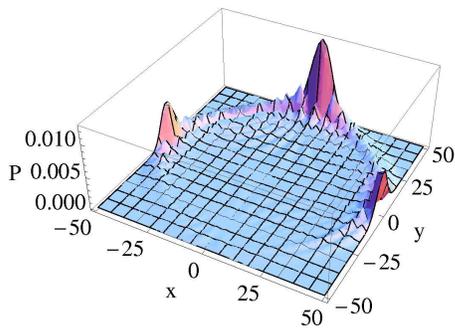,height=4.25cm}}
\caption{Spatial probability distribution after $t=50$ time steps of the alternate quantum walk with the initial state of the coin as in Eq.~\eqref{initialstate1}.}
\label{plot1b}
\end{figure}
\begin{figure}[t]
\centerline{\psfig{figure=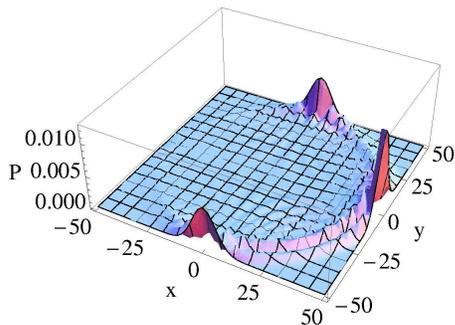,height=4.25cm}}
\caption{Spatial probability distribution after $t=50$ time steps of the alternate quantum walk with the initial state of the coin as in Eq.~\eqref{initialstate2}.}
\label{plot1c}
\end{figure}
The state orthogonal to $\miniket{\psi_1}$ ($\miniket{\psi_2}$) gives a distribution that is the symmetric to Fig.~\ref{plot1b} (Fig.~\ref{plot1c}) with respect to the $x$ ($y$) axis. It is easy to notice that the spatial probability distribution is enhanced in a particular direction of the $x$-$y$ plane on which the walker is moving. More precisely, through a numerical study, we have found a correspondence between the direction of the Bloch vector~\cite{nielsen} of the initial coin state and this enhanced direction. In fact, we just need to consider the projection of this vector onto the azimuthal plane of the Bloch sphere orthogonal to the vector corresponding to the state in Eq.~\eqref{eq:AlternateInitial}. The states in Eq.~\eqref{initialstate1}, Eq.~\eqref{initialstate2} and those orthogonal to them belong to this plane and each one corresponds to a specific direction of the $x$-$y$ plane on which the walker is moving [for instance, the state in Eq.~\eqref{initialstate1} corresponds to the positive $y$ direction]. The closer the projection of the initial state of the coin is to one of these four states, the more the corresponding direction is enhanced. Clearly, the only states on the Bloch sphere that have a null projection on the aforementioned plane are the one in Eq.~\eqref{eq:AlternateInitial} and its orthogonal state. These are the only cases in which the probability distribution is symmetric with respect to both the axes (see Fig.~\ref{fig:GroverWalkNonLocalized}). 

Let us now focus the analysis on the generation of entanglement in our scheme. In Ref.~\cite{ourwalk}, we have already shown that the alternate quantum walk with the initial condition as in Eq.~\eqref{eq:AlternateInitial} is able to generate more $x$-$y$ spatial entanglement than the Grover walk in its non-localized case. Here, we extend the investigation to other initial states of the coin. In order to evaluate the $x$-$y$ spatial entanglement, we need to trace out the degree of freedom embodied by the coin, and then calculate the negativity $N$ of the partial transpose, in its generalization for higher-dimensional systems (so as to have $0\le N\le 1$)~\cite{quditsnegativity}. Also in this case, the initial state of the coin in the alternate quantum walk is described by Eq.~\eqref{generalinitialstate} and we have considered a number of steps $t=10$, so as to perform the numerical calculations in a reasonable time. However, we have checked also for longer times and the behavior does not change. We first report the results obtained by fixing the value of $\phi$ and varying $\theta$. We have considered $\theta$ in the interval $[0,\pi]$ and uniformly taken $60$ points in it. The $x$-$y$ spatial entanglement generated in the alternate quantum walk is presented in Fig.~\ref{plot2a}, Fig.~\ref{plot2b} and Fig.~\ref{plot2d}, for a value of $\phi$ equal to $0$, $\pi/8$ and $\pi/2$, respectively.
\begin{figure}[t]
\centerline{\psfig{figure=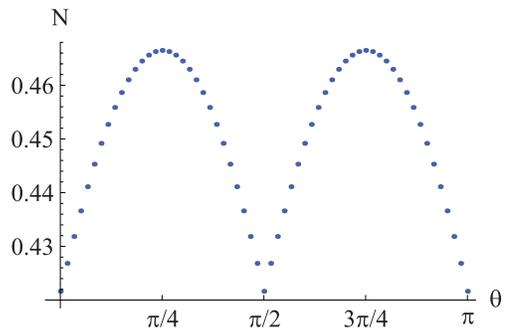,height=4.25cm}}
\caption{Entanglement between $x$ and $y$ position of the walker in the alternate quantum walk, with the initial conditions of the coin as in Eq.~\eqref{generalinitialstate}, $\phi=0$ and after a number of steps $t=10$.}
\label{plot2a}
\end{figure}
\begin{figure}[t]
\centerline{\psfig{figure=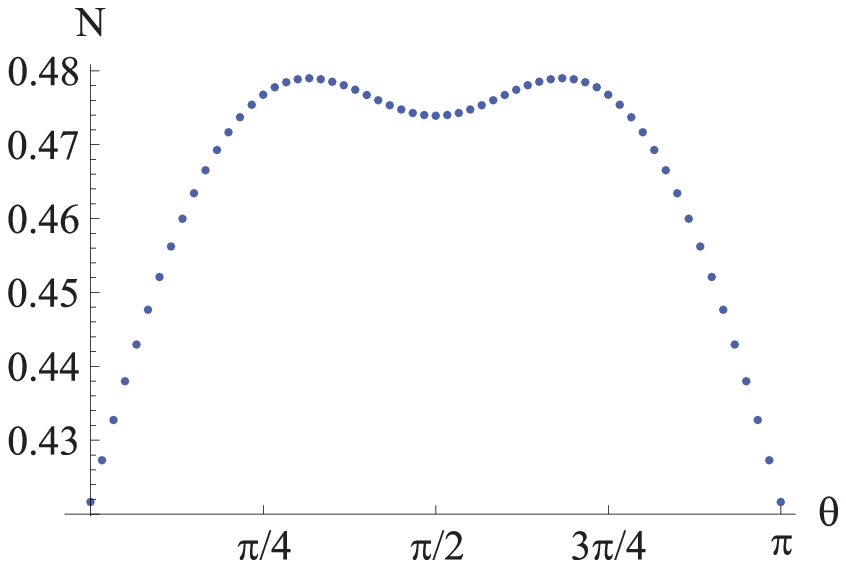,height=4.25cm}}
\caption{Entanglement between $x$ and $y$ position of the walker in the alternate quantum walk, with the initial conditions of the coin as in Eq.~\eqref{generalinitialstate}, $\phi=\pi/8$ and after a number of steps $t=10$.}
\label{plot2b}
\end{figure}
\begin{figure}[t]
\centerline{\psfig{figure=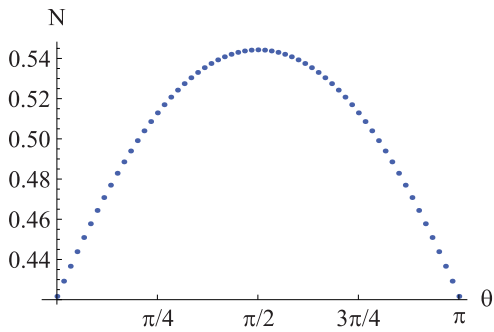,height=4.25cm}}
\caption{Entanglement between $x$ and $y$ position of the walker in the alternate quantum walk, with the initial conditions of the coin as in Eq.~\eqref{generalinitialstate}, $\phi=\pi/2$ and after a number of steps $t=10$.}
\label{plot2d}
\end{figure}

In order to provide a more complete picture of the generated $x$-$y$ spatial entanglement, we have also studied its behavior against both $\theta$ and $\phi$. With respect to $\phi$, the results are periodic with a period of $\pi$. The computational power required for the calculations is clearly larger in this case, and we have therefore reduced the number of points in each interval for $\theta$ and $\phi$. As a good compromise between the required computational time and the readability of the plot, we have chosen to take $20$ points for each interval. The result is presented in Fig.~\ref{plot3}.
\vskip0.25in
\begin{figure}[t]
\centerline{\psfig{figure=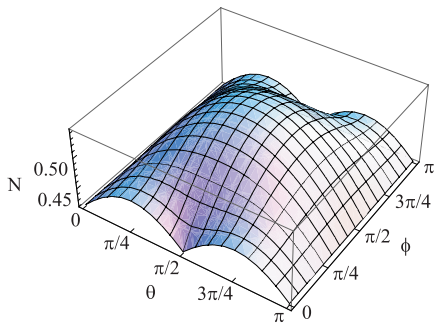,height=4.25cm}}
\caption{Entanglement between $x$ and $y$ position of the walker in the alternate quantum walk against both $\theta$ and $\phi$, with the initial conditions of the coin as in Eq.~\eqref{generalinitialstate} and after a number of steps $t=10$.}
\label{plot3}
\end{figure}

The generated entanglement is minimal for the values of $\theta$ and $\phi$ corresponding to the states in Eq.~\eqref{initialstate1}, Eq.~\eqref{initialstate2} and those orthogonal to them (with a negativity, for a number of steps  $t=10$, of $N \sim 0.42164$), while it is maximal for the state in Eq.~\eqref{eq:AlternateInitial} and the orthogonal one ($N \sim 0.54428$). We have also checked the minimum value of the entanglement generated in the alternate quantum walk [with the initial condition of the coin as in Eq.~\eqref{initialstate1}] with respect to the entanglement generated in the non-localized case of the Grover walk (the non-localized case gives the maximum value of the generated entanglement for this walk) after the same number of time steps and we have found that the former is always larger than the latter. We have investigated this property for larger numbers of steps, up to $t=20$, always obtaining the same result. We can thus state that, in all the considered cases, our scheme outperforms the Grover walk in the generation of $x$-$y$ spatial entanglement for any initial condition of the coin.

\section{Generalization to wider classes of quantum walks and limit theorem for the alternate walk}
\label{generalizedwalks}

The equivalence proven in Sec.~\ref{detailedproof} can be extended to a wider scenario. In this case, the generalized Grover operation is given by
\begin{equation}
\hat{A}=
\left(
\begin{array}{rrrr}
-c^2&|cs|&|cs|&s^2\\
|cs|&-s^2&c^2&|cs|\\
|cs|&c^2&-s^2&|cs|\\
s^2&|cs|&|cs|&-c^2
\end{array}\right),
\end{equation}
with $c=\cos\gamma$, $s=\sin\gamma$ and $\gamma\in (0,2\pi)$ ($\gamma\ne\pi/2,\pi,3\pi/2$). Let us take the initial state of the coin in the generalized Grover walk as
\begin{equation}
q_0\miniket{0}_{C'}+q_1\miniket{1}_{C'}+q_2\miniket{2}_{C'}+q_3\miniket{3}_{C'},
\end{equation}
with $|q_0|^2+|q_1|^2+|q_2|^2+|q_3|^2=1$. For the generalized alternate quantum walk, we consider the coin operation 
\begin{equation}
\hat{U}=
\begin{pmatrix}
c&s\\
s&-c
\end{pmatrix}\;,
\end{equation}
and the initial state of the coin given by
\begin{equation}
\nu_0\miniket{0}_{C}+\nu_1\miniket{1}_{C},
\label{generalizedwalkinitialstate}
\end{equation}
with $|\nu_0|^2+|\nu_1|^2=1$. In this case, we can prove the following theorem.

\begin{theo}

If we take the initial states of the coins satisfying the conditions
\begin{equation}
\begin{split}
|\nu_0|^2&=|\nu_1|^2,\\
\nu_0\nu_1^*+&\nu_0^*\nu_1=0
\label{machidaproof1}
\end{split}
\end{equation}
and
\begin{equation}
\begin{split}
q_0=q_3&=(-1)^\xi\frac{|cs|}{\sqrt{2}s},\\
q_1=q_2&=-(-1)^\xi\frac{s}{\sqrt{2}}
\end{split}
\end{equation}
with $\xi=0,1$, and the walker starting at the origin in both cases, then the generalized alternate quantum walk and the generalized Grover walk have the same spatial probability distribution at any time $t$.

\end{theo}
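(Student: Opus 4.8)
The plan is to repeat, step for step, the two-stage argument of Sec.~\ref{detailedproof}: first prove a weighted analogue of Lemma~\ref{lem:grover} for the generalized Grover walk; then a weighted analogue of Theorem~\ref{theo1} expressing the amplitudes $\beta_{x,y,j}(t)$ of the generalized alternate walk in terms of the $\alpha_{x,y,j}(t)$ of the generalized Grover walk; and finally deduce the equality of the two spatial probability distributions, exactly as in the paragraph following Theorem~\ref{theo1}, from the fact that the $\alpha$'s remain real. Throughout I would take $c,s>0$ (i.e.\ $\gamma\in(0,\pi/2)$); the remaining ranges of $\gamma$ follow from the same computation after absorbing the signs of $c$ and $s$ into $\hat U$, into the index $\xi$, and into the $|cs|$ entries of $\hat A$.

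\emph{Weighted lemma.} Writing $A_{ij}$ for the matrix elements of $\hat A$, I would show by induction on $t$ that
\begin{equation}
s\,\alpha_{x-1,y,0}(t)+c\,\alpha_{x-1,y,1}(t)+c\,\alpha_{x+1,y,2}(t)+s\,\alpha_{x+1,y,3}(t)=0
\end{equation}
and the partner relation obtained by exchanging $x\leftrightarrow y$ and relabeling the coin pairs accordingly,
\begin{equation}
s\,\alpha_{x,y-1,0}(t)+c\,\alpha_{x,y-1,2}(t)+c\,\alpha_{x,y+1,1}(t)+s\,\alpha_{x,y+1,3}(t)=0,
\end{equation}
which for $\gamma=\pi/4$ reduce, up to the overall factor $1/\sqrt2$, to Eqs.~(\ref{alpha1})--(\ref{alpha2}). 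In the base case only two of the four terms survive at $t=0$, and the prescribed values of the $q_k$'s --- in particular $q_0=q_3$, $q_1=q_2$ and $|q_0/q_1|=|cs|/s^2$ --- are exactly those that make them cancel. For the inductive step one inserts the one-step update $\alpha_{x,y,0}(t+1)=\sum_j A_{0j}\alpha_{x+1,y+1,j}(t)$ and its three analogues into the left-hand side at time $t+1$; it collapses by virtue of the identities
\begin{equation}
s\,A_{0j}+c\,A_{2j}=(0,c,0,s)_j,\qquad c\,A_{1j}+s\,A_{3j}=(s,0,c,0)_j,
\end{equation}
which hold because $c^2+s^2=1$ and $\hat A$ is persymmetric, leaving precisely the partner relation at time $t$, hence zero by the inductive hypothesis.

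\emph{Weighted amplitude relation.} Guided by the case $\gamma=\pi/4$, which must recover Theorem~\ref{theo1}, I would then conjecture and prove by induction that
\begin{equation}
\begin{split}
\beta_{x,y,0}(t)&=(-1)^t e^{i\chi}\,[\alpha_{x,y,0}(t)+i\,\alpha_{x,y,2}(t)],\\
\beta_{x,y,1}(t)&=(-1)^t e^{i\chi}\,[-\alpha_{x,y,1}(t)+i\,\alpha_{x,y,3}(t)],
\end{split}
\end{equation}
with a phase $e^{i\chi}$ determined by $\gamma$ and by the chosen phase of $\nu_0$. The base case fixes $e^{i\chi}$ and is where the hypotheses $|\nu_0|^2=|\nu_1|^2$ and $\nu_0\nu_1^*+\nu_0^*\nu_1=0$ enter: they force $\nu_1=\pm i\,\nu_0$, which is precisely the relative phase between $\beta_{0,0,0}(0)$ and $\beta_{0,0,1}(0)$ produced by the right-hand side evaluated on the prescribed $q_k$'s. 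For the inductive step one writes one full time step of the generalized alternate walk (coin $\hat U$, shift $\hat S_x$, coin $\hat U$, shift $\hat S_y$) as an eight-term recursion for $\beta_{x,y,0}(t+1)$ and $\beta_{x,y,1}(t+1)$ --- the analogue of Eqs.~(\ref{alt1})--(\ref{alt2}) with $\tfrac12$ replaced by the appropriate products of $c$ and $s$ --- substitutes the inductive hypothesis, and compares with the one-step Grover update. The discrepancy between the two sides is a single bracket which, once the unit factor $s+ic$ is pulled out (the generalization of $1+i=\sqrt2\,e^{i\pi/4}$ that carries the phase), is exactly an instance of the weighted lemma, and therefore vanishes.

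\emph{Conclusion and main difficulty.} Because $\hat A$ and the $q_k$'s are real, every $\alpha_{x,y,j}(t)$ is real, so the amplitude relation gives $|\beta_{x,y,0}(t)|^2=\alpha_{x,y,0}(t)^2+\alpha_{x,y,2}(t)^2$ and $|\beta_{x,y,1}(t)|^2=\alpha_{x,y,1}(t)^2+\alpha_{x,y,3}(t)^2$, hence $\sum_{j=0}^1|\beta_{x,y,j}(t)|^2=\sum_{j=0}^3|\alpha_{x,y,j}(t)|^2$ at every site and every time, which is the asserted equality of spatial probability distributions. I expect the weighted lemma to be the crux of the argument: one has to guess the correct $c,s$-weighting of the two relations and then verify that the dynamics genuinely propagates them, i.e.\ the closure identities above --- this is where the persymmetric structure of $\hat A$, the restriction $\gamma\ne\pi/2,\pi,3\pi/2$ (so that $c,s\ne0$ and both the weights and the $q_k$'s are well defined), and the careful bookkeeping of the sign of $cs$ all come into play. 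The rest is a longer but routine version of the computations in Sec.~\ref{detailedproof}.
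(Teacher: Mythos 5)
Your proposal is correct and follows essentially the same route as the paper, which itself only sketches this proof: your weighted lemma and your amplitude relation are exactly the paper's stated intermediate identities specialized to $c,s>0$ (where $|c|=c$, $|s|=s$ and $\mathrm{sign}(cs)=1$), your closure identities $s\,A_{0j}+c\,A_{2j}=(0,c,0,s)_j$ and $c\,A_{1j}+s\,A_{3j}=(s,0,c,0)_j$ are the correct generalization of the $G_{0j}+G_{2j}$ computation in Lemma~\ref{lem:grover}, and the unit factor $s+ic$ you pull out is precisely the generalization of $e^{i\pi/4}$ used in Theorem~\ref{theo1}. The only cosmetic difference is that the paper writes the prefactor explicitly as $(-1)^{t+\xi}\sqrt{2}\,\nu_0\{c+(-1)^\kappa i s\}$ and keeps both signs $\nu_1=(-1)^\kappa i\nu_0$ inside the ansatz via $(-1)^\kappa$, whereas you fix one sign and absorb the rest into the phase $e^{i\chi}$ and a conjugation.
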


The proof is along the line of Th.~\ref{theo1} in Sec.~\ref{detailedproof}. Eq.~\eqref{machidaproof1} can be rewritten as
\begin{equation}
\begin{split}
&|\nu_0|=\frac{1}{\sqrt{2}},\\
&\nu_1=(-1)^\kappa i\nu_0
\end{split}
\end{equation}
with $\kappa=0,1$ [corresponding to the state in Eq.~\eqref{eq:AlternateInitial} and the orthogonal one, respectively]. In this case, Eqs.~\eqref{theoremequation1} and \eqref{theoremequation2} are substituted by
\begin{equation}
\begin{split}
\beta_{x,y,0}&=(-1)^{t+\xi}\sqrt{2}\,\nu_0\{c+(-1)^\kappa i\,s\}\\
&\times[sign(cs)\alpha_{x,y,0}(t)+(-1)^\kappa i\,\alpha_{x,y,2}(t)],\\
\beta_{x,y,1}&=(-1)^{t+\xi}\sqrt{2}\,\nu_0\{c+(-1)^\kappa i\,s\}\\
&\times[-\alpha_{x,y,1}(t)+sign(cs)(-1)^\kappa i\,\alpha_{x,y,3}(t)],
\end{split}
\end{equation}
where $sign(x)$ denotes the sign of $x$, while
\begin{equation}
\begin{split}
&|s|\alpha_{x-1,y,0}(t)+|c|\alpha_{x-1,y,1}(t)\\
&+|c|\alpha_{x+1,y,2}(t)+|s|\alpha_{x+1,y,3}(t)=0,\\
&|s|\alpha_{x,y-1,0}(t)+|c|\alpha_{x,y-1,2}(t)\\
&+|c|\alpha_{x,y+1,1}(t)+|s|\alpha_{x,y+1,3}(t)=0
\end{split}
\end{equation}
take the place of Eqs.~\eqref{alpha1} and ~\eqref{alpha2}, when the walk starts at the origin and the initial state corresponds to
\begin{equation}
\begin{split}
&\alpha_{0,0,0}(0)=q_0=(-1)^\xi\frac{|cs|}{\sqrt{2}s},\\
&\alpha_{0,0,1}(0)=q_1=-(-1)^\xi\frac{s}{\sqrt{2}},\\
&\alpha_{0,0,2}(0)=q_2=-(-1)^\xi\frac{s}{\sqrt{2}},\\
&\alpha_{0,0,3}(0)=q_3=(-1)^\xi\frac{|cs|}{\sqrt{2}s}.\\
\end{split}
\end{equation}
The proof is a straightforward generalization of the one in Sec.~\ref{detailedproof} so, for the sake of simplicity, we do not report it here entirely.

We now want to find a limit theorem for the alternate quantum walk, in the same way as limit theorems have been already found for other quantum walks in the literature~\cite{FourierQWs,FourierQWs2}. For this, we define $\miniket{\psi_t(x,y)}$ as
\begin{equation}
\miniket{\psi_t(x,y)}=\beta_{x,y,0}\miniket{0}_C+\beta_{x,y,1}\miniket{1}_C
\end{equation}
and we put $\miniket{0}_C=\,^T[1,0]$,  $\miniket{1}_C=\,^T[0,1]$, where $T$ is the transpose operator. The probability that the quantum walker $(X_t,Y_t)$ is at position $(x,y)\in\mathbb{Z}^2$ at time $t$ is
\begin{equation}
P[(X_t,Y_t)=(x,y)]=\miniprod{\psi_t(x,y)}{\psi_t(x,y)}.
\end{equation}
The Fourier transform $\miniket{\hat\Psi_t(k_x,k_y)}$ of $\miniket{\psi_t(x,y)}$, with $k_x,k_y\in[-\pi,\pi)$, is given by
\begin{equation}
\miniket{\hat\Psi_t(k_x,k_y)}=\sum_{(x,y)\in\mathbb{Z}^2}e^{-i(k_x x+k_y y)}\miniket{\psi_t(x,y)}.
\label{fourier}
\end{equation}
By the inverse Fourier transform, we have
\begin{equation}
\miniket{\psi_t(x,y)}=\int_{-\pi}^\pi\frac{dk_x}{2\pi}\int_{-\pi}^\pi\frac{dk_y}{2\pi}e^{i(k_x x+k_y y)}\miniket{\hat\Psi_t(k_x,k_y)}.
\end{equation}
From the time evolution of the alternate quantum walk and Eq.~\eqref{fourier}, the Fourier transform satisfies
\begin{equation}
\miniket{\hat\Psi_{t+1}(k_x,k_y)}=\hat{V}(k_x,k_y)\miniket{\hat\Psi_t(k_x,k_y)},
\end{equation}
where $\hat{V}(k_x,k_y)=\hat{R}(k_y)\hat{U}\hat{R}(k_x)\hat{U}$ and
\begin{equation}
\hat{R}(k)=
\begin{pmatrix}
e^{ik}&0\\
0&e^{-ik}
\end{pmatrix}\;.
\end{equation}
Therefore, we can get
\begin{equation}
\miniket{\hat\Psi_t(k_x,k_y)}=\hat{V}(k_x,k_y)^t\miniket{\hat\Psi_0(k_x,k_y)}.
\label{hatPsit}
\end{equation}
From Eq.~\eqref{generalizedwalkinitialstate}, the initial state becomes
\begin{equation}
\miniket{\psi_0(x,y)}=
\left\{\begin{array}{cc}
^T[\nu_0,\nu_1] & \hskip0.5cm\text{for}\hskip0.25cm(x,y)=(0,0),\\
^T[0,0] & \hskip0.5cm\text{for}\hskip0.25cm(x,y)\ne(0,0).
\end{array}\right.
\label{psi0}
\end{equation}
Let us note that $\miniket{\hat{\Psi}_0(k_x,k_y)}=\miniket{\psi_0(0,0)}$.

We can now prove the following theorem.

\begin{theo}

As $t\rightarrow\infty$, we can obtain the limit distribution for the alternate quantum walk as
\begin{equation}
\lim_{t\rightarrow\infty}P\left(\frac{X_t}{t}\le x,\frac{Y_t}{t}\le y\right)=\int_{-\infty}^x du\int_{-\infty}^y dv\,f(u,v),
\end{equation}
where
\begin{equation}
\begin{split}
f(x,y)&=\frac{1}{\pi^2(1-x^2)(1-y^2)}\{1-(|\nu_0|^2-|\nu_1|^2)y\\
&-\frac{\nu_0\nu_1^*+\nu_0^*\nu_1}{2cs}[c^2(x-y)+s^2(x+y)]\}I_D(x,y),\\
D&=\left\{(x,y)\left|\frac{(x+y)^2}{4c^2}+\frac{(x-y)^2}{4s^2}<1\right.\right\},
\end{split}
\end{equation}
and $I_D(x,y)=1$ if $(x,y)\in D$, $I_D(x,y)=0$ if $(x,y)\notin D$.

\end{theo}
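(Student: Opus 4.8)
The plan is to use the Fourier-analytic route to weak limits of quantum walks (as in Refs.~\cite{FourierQWs,FourierQWs2}): obtain the limiting mixed moments of $(X_t/t,Y_t/t)$ from the spectral decomposition of the one-step operator $\hat V(k_x,k_y)$ appearing in Eq.~\eqref{hatPsit}, and then read off the density $f$ and the support $D$ by changing variables to the group-velocity coordinates.

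First I would diagonalize $\hat V(k_x,k_y)=\hat R(k_y)\hat U\hat R(k_x)\hat U$. Being a $2\times 2$ unitary with $\det\hat V=1$, it has eigenvalues $e^{\pm i\theta(k_x,k_y)}$ with $2\cos\theta=\mathrm{Tr}\,\hat V$; an explicit computation gives $\cos\theta=c^{2}\cos(k_x+k_y)+s^{2}\cos(k_x-k_y)$, and the two normalized eigenvectors $\miniket{v_\pm(k_x,k_y)}$ follow. With $\miniket{\hat\Psi_0}={}^{T}[\nu_0,\nu_1]$ as in Eq.~\eqref{psi0}, Eq.~\eqref{hatPsit} becomes $\miniket{\hat\Psi_t(k_x,k_y)}=\sum_{\pm}e^{\pm it\theta}\miniprod{v_\pm}{\hat\Psi_0}\miniket{v_\pm}$.

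Next I would extract the moments. From the inverse Fourier transform and integration by parts, multiplication of $\miniket{\psi_t(x,y)}$ by $x$ (by $y$) corresponds to $i\partial_{k_x}$ (to $i\partial_{k_y}$) acting on $\miniket{\hat\Psi_t}$, and since these operators are self-adjoint on $L^2([-\pi,\pi)^2)$ Parseval gives
\[
E\!\left[\Big(\tfrac{X_t}{t}\Big)^{m}\!\Big(\tfrac{Y_t}{t}\Big)^{n}\right]=\int_{-\pi}^{\pi}\!\frac{dk_x}{2\pi}\int_{-\pi}^{\pi}\!\frac{dk_y}{2\pi}\ \sand{\hat\Psi_t}{\big(\tfrac{i}{t}\partial_{k_x}\big)^{m}\big(\tfrac{i}{t}\partial_{k_y}\big)^{n}}{\hat\Psi_t}.
\]
When the derivatives act on $\miniket{\hat\Psi_t}$, the only contributions of order $t^{m+n}$ are those in which every derivative hits the phase $e^{\pm it\theta}$, producing $\mp\partial_{k_x}\theta$ and $\mp\partial_{k_y}\theta$; derivatives falling on $\miniket{v_\pm}$ or on $\miniprod{v_\pm}{\hat\Psi_0}$ are subleading, and the cross terms between the $+$ and $-$ branches carry a factor $e^{\pm 2it\theta}$ that vanishes in the limit by a stationary-phase / Riemann--Lebesgue argument. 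Hence
\[
\lim_{t\to\infty}E\!\left[\Big(\tfrac{X_t}{t}\Big)^{m}\!\Big(\tfrac{Y_t}{t}\Big)^{n}\right]=\sum_{\pm}\int_{-\pi}^{\pi}\!\frac{dk_x}{2\pi}\int_{-\pi}^{\pi}\!\frac{dk_y}{2\pi}\ \big(\mp\partial_{k_x}\theta\big)^{m}\big(\mp\partial_{k_y}\theta\big)^{n}\,\big|\miniprod{v_\pm}{\hat\Psi_0}\big|^{2}.
\]

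Finally I would change variables to $(u,v)=(\mp\partial_{k_x}\theta,\mp\partial_{k_y}\theta)$ (upper signs on the $+$ branch, lower on the $-$ branch). The image of the torus $[-\pi,\pi)^2$ under this map is exactly $D$, which I would check from the identity $\tfrac{(u+v)^2}{4c^2}+\tfrac{(u-v)^2}{4s^2}=1-\tfrac{c^2s^2\,[\cos(k_x+k_y)-\cos(k_x-k_y)]^2}{\sin^2\theta}$ derived directly from the formula for $\cos\theta$; the Jacobian of the map supplies the prefactor $1/[\pi^2(1-u^2)(1-v^2)]$ after absorbing the $(2\pi)^{-2}$ and summing the two sheets; and the overlaps $|\miniprod{v_\pm}{\hat\Psi_0}|^2$, re-expressed through $(u,v)$ with the explicit $\nu_0,\nu_1,c,s$, combine into the bracketed term $1-(|\nu_0|^2-|\nu_1|^2)v-\tfrac{\nu_0\nu_1^*+\nu_0^*\nu_1}{2cs}[c^2(u-v)+s^2(u+v)]$. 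Since the resulting candidate law is compactly supported it is determined by its moments, so the moment convergence upgrades to weak convergence, which is the stated limit theorem. \textbf{Main obstacle:} the scheme is routine, and the real work sits in this last step — inverting the velocity map $(k_x,k_y)\mapsto(u,v)$ in closed form, computing its Jacobian and verifying that it collapses to $(1-u^2)^{-1}(1-v^2)^{-1}$ (this is where any naively expected square-root factors must cancel), and rewriting $|\miniprod{v_\pm}{\hat\Psi_0}|^2$ as functions of $(u,v)$ so that the two branches add up to the displayed polynomial. This two-dimensional bookkeeping, not any conceptual point, is the hard part.
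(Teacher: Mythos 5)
Your proposal is correct and follows essentially the same route as the paper: diagonalize $\hat{V}(k_x,k_y)$ in Fourier space, extract the leading $t^{r_1+r_2}$ behaviour of the joint moments from the derivatives of the eigenvalue phases weighted by $|\miniprod{v_j}{\hat\Psi_0}|^2$, and identify the resulting torus integral with the moments of $f$ via the group-velocity change of variables, upgrading to weak convergence since the limit is compactly supported. The paper likewise leaves the final change-of-variables bookkeeping (which you correctly flag as the only laborious step, and your stated identity for $\tfrac{(u+v)^2}{4c^2}+\tfrac{(u-v)^2}{4s^2}$ checks out) implicit, deferring to Ref.~\cite{FourierQWs2}.
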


\begin{proof}

\begin{description}

Our approach is based on the Fourier analysis applied to quantum walks on $\mathbb{Z}^2$~\cite{FourierQWs,FourierQWs2}. We concentrate on the characteristic function $E(e^{i(z_1X_t/t+z_2Y_t/t)})$ as $t\rightarrow\infty$, where $E(X)$ denotes the expected value of $X$. The eigenvalues $\lambda_j(k_x,k_y)$ (with $j=1,2$) of $\hat{V}(k_x,k_y)$ are
\begin{equation}
\begin{split}
&\lambda_j(k_x,k_y)=c^2\cos(k_x+k_y)+s^2\cos(k_x-k_y)\\
&+i(-1)^j\sqrt{1-[c^2\cos(k_x+k_y)+s^2\cos(k_x-k_y)]^2}.
\end{split}
\end{equation}
The normalized eigenvector $\miniket{v_j(k_x,k_y)}$, corresponding to $\lambda_j(k_x,k_y)$, is given by
\begin{equation}
\begin{split}&\miniket{v_j(k_x,k_y)}=\frac{1}{\sqrt{N_j(k_x,k_y)}}\\
&\times\begin{pmatrix}
cs[e^{i(k_x+k_y)}-e^{-i(k_x-k_y)}]\\
i\left[g_1(k_x,k_y)+(-1)^j\sqrt{1-g_2(k_x,k_y)^2}\right]
\end{pmatrix}\;,
\end{split}
\end{equation}
where
\begin{equation}
\begin{split}
&g_1(k_x,k_y)=-c^2\sin(k_x+k_y)+s^2\sin(k_x-k_y),\\
&g_2(k_x,k_y)=c^2\cos(k_x+k_y)+s^2\cos(k_x-k_y),
\end{split}
\end{equation}
and $N_j(k_x,k_y)$ is a normalization factor. Eq.~\eqref{hatPsit} can be written as
\begin{equation}
\begin{split}
&\miniket{\hat\Psi_t(k_x,k_y)}=\hat{V}(k_x,k_y)^t\miniket{\hat\Psi_0(k_x,k_y)}\\
&=\sum_{j=1}^2\lambda_j(k_x,k_y)^t\sand{v_j(k_x,k_y)}{\hat\Psi_0(k_x,k_y)}{v_j(k_x,k_y)}.
\end{split}
\end{equation}
Note that $\miniket{\hat\Psi_0(k_x,k_y)}=\miniket{\psi_0(0,0)}=\,^T[\nu_0,\nu_1]$ [see Eq.~\eqref{psi0}]. The joint moment of $X_t$ and $Y_t$ is expressed as
\begin{equation}
\begin{split}
&E(X_t^{r_1}Y_t^{r_2})=\sum_{(x,y)\in\mathbb{Z}^2} x^{r_1}y^{r_2}P[(X_t,Y_t)=(x,y)]\\
&=\int_{-\pi}^\pi\frac{dk_x}{2\pi}\int_{-\pi}^\pi\frac{dk_y}{2\pi}\sand{\hat\Psi_t(k_x,k_y)}{D_x^{r_1}D_y^{r_2}}{\hat\Psi_t(k_x,k_y)}\\
&=(t)_{r_1+r_2}\int_{-\pi}^\pi\frac{dk_x}{2\pi}\int_{-\pi}^\pi\frac{dk_y}{2\pi}\sum_{j=1}^2\left[\frac{D_x\lambda_j(k_x,k_y)}{\lambda_j(k_x,k_y)}\right]^{r_1}\\
&\times\left[\frac{D_y\lambda_j(k_x,k_y)}{\lambda_j(k_x,k_y)}\right]^{r_2}|\miniprod{v_j(k_x,k_y)}{\hat\Psi_0(k_x,k_y)}|^2\\
&+O(t^{r_1+r_2-1})
\end{split}
\end{equation}
with $D_x=i(\partial{}/\partial{k_x})$, $D_y=i(\partial{}/\partial{k_y})$ and $(t)_r=t(t-1)\times\cdot\cdot\cdot\times(t-r+1)$. Noting that
\begin{equation}
\begin{split}
&\frac{D_x\lambda_j(k)}{\lambda_j(k)}=\frac{-(-1)^j[c^2\sin(k_x+k_y)+s^2\sin(k_x-k_y)]}{\sqrt{1-[c^2\cos(k_x+k_y)+s^2\cos(k_x-k_y)]^2}},\\
&\frac{D_y\lambda_j(k)}{\lambda_j(k)}=\frac{-(-1)^j[c^2\sin(k_x+k_y)-s^2\sin(k_x-k_y)]}{\sqrt{1-[c^2\cos(k_x+k_y)+s^2\cos(k_x-k_y)]^2}},
\end{split}
\end{equation}
we see that
\begin{equation}
\begin{split}
&\lim_{t\rightarrow\infty}E\left[\left(\frac{X_t}{t}\right)^{r_1}\left(\frac{Y_t}{t}\right)^{r_2}\right]\\
&=\int_{-\pi}^\pi\frac{dk_x}{2\pi}\int_{-\pi}^\pi\frac{dk_y}{2\pi}\sum_{j=1}^2\left[\frac{D_x\lambda_j(k_x,k_y)}{\lambda_j(k_x,k_y)}\right]^{r_1}\\
&\times\left[\frac{D_y\lambda_j(k_x,k_y)}{\lambda_j(k_x,k_y)}\right]^{r_2}|\miniprod{v_j(k_x,k_y)}{\hat\Psi_0(k_x,k_y)}|^2\\
&=\int_{-\infty}^\infty dx\int_{-\infty}^\infty dy\,x^{r_1}y^{r_2}f(x,y),
\end{split}
\label{limE}
\end{equation}
where
\begin{equation}
\begin{split}
&f(x,y)=\frac{1}{\pi^2(1-x^2)(1-y^2)}\{1-(|\nu_0|^2-|\nu_1|^2)y\\
&-\frac{\nu_0\nu_1^*+\nu_0^*\nu_1}{2cs}[c^2(x-y)+s^2(x+y)]\}I_D(x,y),
\end{split}
\end{equation}
with
\begin{equation}
D=\left\{(x,y)\left|\frac{(x+y)^2}{4c^2}+\frac{(x-y)^2}{4s^2}<1\right.\right\}.
\end{equation}
By Eq.~\eqref{limE}, we can compute the characteristic function $E(e^{i(z_1X_t/t+z_2Y_t/t)})$ as $t\rightarrow\infty$. Since $f(x,y)$ is a density function (see Ref.~\cite{FourierQWs2}), proof is completed.

\end{description}
\hfill\qed

\end{proof}

For the sake of completeness, we present the limit density function $f(x,y)$ with $\gamma=\pi/4$ and $\miniket{\psi_0(0,0)}=\,^T[1/\sqrt{2},i/\sqrt{2}]$, corresponding to the particular case of the alternate quantum walk studied in Ref.~\cite{ourwalk}, in Fig.~\ref{limitfunction}.

\begin{figure}[t]
\centerline{\psfig{figure=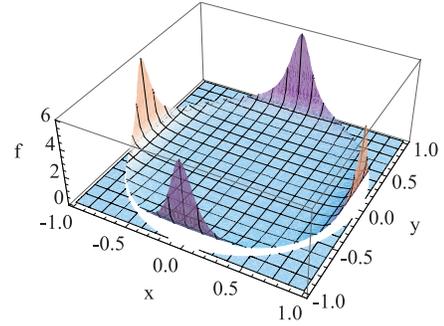,height=4.25cm}}
\caption{Limit density function $f(x,y)$ for the alternate quantum walk with $\gamma=\pi/4$ and $\miniket{\psi_0(0,0)}=\,^T[1/\sqrt{2},i/\sqrt{2}]$.}
\label{limitfunction}
\end{figure}

\section{Conclusions}
\label{conclusions}
We have provided a detailed proof of the equivalence between the spatial probability distributions of the alternate quantum walk proposed in Ref.~\cite{ourwalk} and the Grover walk. A deeper investigation of the alternate quantum walk has been performed, considering different initial conditions, and we have found a correspondence between the Bloch vector of the initial coin state and an enhancement of the spatial probability distribution in a particular direction of the plane on which the walker is moving. In the context of general initial conditions, we have also investigated the generation of $x$-$y$ spatial entanglement; we have proved that its maximum corresponds to the initial conditions studied in Ref.~\cite{ourwalk}. The equivalence between quantum walks with two- and four-dimensional coins has then been extended to wider classes and a limit theorem has been put forward for this generalized alternate quantum walk. We believe that the extensive exploration of the scheme provided in this work will contribute to the development of interesting proposals for the exploitation of quantum walks in feasible experimental settings.

\section{Acknowledgments}
We thank C. Gillis, N. Lo Gullo, M. Paternostro and Y. Shikano for discussions. We acknowledge support from Science Foundation Ireland under Grants No. 05/IN/I852 and No. 10/IN.1/I2979. C.D.F. is supported by the Irish Research Council for Science, Engineering and Technology.

\end{document}